\documentclass[12pt]{article}
\usepackage{xcolor}
\usepackage{textcomp}
\usepackage{chet}
\usepackage[numbers,sort&compress]{natbib}
\usepackage{mciteplus}
\usepackage{amssymb,amsmath}

\newcommand{\cV}{\mathcal{V}}

\newcommand{\bR}{\mathbb{R}} 
\newcommand{\be}{\begin{equation}}
\newcommand{\ee}{\end{equation}}
\newcommand{\beq}{\begin{eqnarray}}
\newcommand{\eeq}{\end{eqnarray}}
\usepackage{tikz}
\usepackage{subcaption}

\usepackage{amsthm}

\newtheorem{theorem}{Theorem}[section]

\newcommand{\bZ}{\mathbb{Z}}

\usepackage[normalem]{ulem}

\title{\LARGE Small Vacuum Energy and Tunneling \\  \vspace{.3cm}  in a Modified Bousso-Polchinski Model}

\author{$^{a,b}$James Halverson, $^{c}$Justin Khoury, and $^{d}$Cody Long}

\affiliation{    $^{a}$Department of Physics, Northeastern University, Boston, MA 02115
\smallskip

$^{b}$The NSF AI Institute for Artificial Intelligence and Fundamental Interactions

\smallskip
$^{c}$Center for Particle Cosmology, Department of Physics and Astronomy, 
\\ University of Pennsylvania,  Philadelphia, PA 19104

\smallskip
$^{d}$Boulder Institute for Theoretical Physics, Boulder, CO 80301}

\abstract{We propose a simplified model for the cosmological constant in string theory flux vacua motivated by type IIB and F-theory compactifications.  Relative to the Bousso-Polchinski model, small vacuum energy spacing occurs in thin wafers rather than thin shells. The model is applied to the entire Sch\"oller-Skarke database of Calabi-Yau fourfolds, which exhibit $532,600,483$ distinct sets of Hodge numbers. The overwhelming majority of those ($99.95\%$ percent for some choices of parameters) exhibit a vacuum energy spacing of~$10^{-120}$ in Planck units or smaller.  Brown-Teitelboim membrane nucleation transitions can populate this landscape of flux vacua. In the thin-wall approximation, and ignoring gravitational corrections, we find that the bubble transitions are always dominated by giant leaps in flux space. The age of the universe places a bound on Calabi-Yau topology that is satisfied for the entire Sch\"oller-Skarke database.}

\begin{document}
\maketitle

\tableofcontents\newpage

\section{Introduction}
\label{sec:intro}

Flux compactifications of string theory yield a discretuum of vacuum energies and rich cosmological dynamics that can potentially account for the observed small value of the cosmological constant. In~\cite{Bousso:2000xa}, Bousso and Polchinski (BP) proposed that such a discretuum, together with Brown-Teitelboim membrane nucleation \cite{Brown:1988kg}, would lead to a bubble cosmology with a non-trivial distribution of cosmological constants. 
The GKP solution \cite{Giddings:2001yu} for flux vacua, and moduli stabilization scenarios that utilize it, such as KKLT \cite{Kachru:2003aw} and LVS \cite{Balasubramanian:2005zx}, provided a concrete realization of the broad BP idea in Type IIB compactifications. Estimates of the number of flux vacua arising from these solutions are $\sim 10^{500}$ \cite{Ashok:2003gk,Denef:2004cf,Denef:2004ze} and $\sim 10^{272,000}$ \cite{Taylor:2015xtz} in Type IIB and F-theory compactifications, respectively.\footnote{More recently, it has been appreciated that there is also an exponentially large number of topologically distinct string geometries, {\it e.g.}, an estimate of $10^{3000}$~\cite{Halverson:2017ffz} and lower bound of $10^{755}$~\cite{Taylor:2017yqr} in certain F-theory constructions.} Furthermore, there has been significant progress in flux compactifications with large numbers of moduli, {\it e.g.},~\cite{demirtas2021small,Demirtas:2022hqf,McAllister:2024lnt}. 

Despite this significant progress over the last $25$ years, the full problem of taming the landscape in order to make statistical predictions from string theory is still out of reach. For instance, aside from our inability to handle the size of the landscape, issues of computational complexity~\cite{Denef:2006ad,Cvetic:2010ky,Bao:2017thx,Halverson:2018cio,Halverson:2019vmd} abound and much remains to be understood about the cosmological measure~\cite{Linde:1993nz,Linde:1993xx,Garriga:2005av,Guth:2007ng} (see~\cite{Freivogel:2011eg} for a review); though there has been progress with deep learning techniques ({\it e.g.}, ~\cite{He:2017set,Ruehle:2017mzq, Krefl:2017yox,Carifio:2017bov} and ~\cite{Halverson:2019tkf,Cole:2019enn,Constantin:2021for,Abel:2021rrj,cole2021probingstructurestringtheory,Abel:2023zwg,Berglund:2024reu} with RL) and measure proposals~\cite{Denef:2017cxt,Carifio:2017nyb,Khoury:2019yoo,Khoury:2019ajl,Kartvelishvili:2020thd,Khoury:2022ish,Khoury:2023ktz,Hassfeld:2022vzk,Hassfeld:2024ake}, some of which investigate the role of complexity in the measure itself. Furthermore, a number of aspects of $\mathcal{N}=1$ corrections remain to be understood. Avoiding these eventually necessary difficulties is one of the hallmarks of the BP model: it gave a qualitative understanding of the relevant physics in a simplified model, setting the stage for the aforementioned works. 

In this paper we propose a modified BP model motivated by Type IIB and F-theory compactifications. Our model is inspired by seminal works~\cite{Denef:2004ze, Denef:2004cf} of Denef and Douglas, but is simplified in the hope that it will shed conceptual light on issues related to vacuum energy spacing, the dynamics of the landscape, and the measure problem. The model is 
\begin{equation}
V = \Lambda_0 - \frac{3}{\mathcal{V}^2}\, |N_I \Pi_I|^2,
\end{equation}
 where~$N_I$ are flux quanta,~$\Pi_I$ are periods of the holomorphic form on the Calabi-Yau,~$\mathcal{V}$ is the overall volume, and~$\Lambda_0>0$ is an uplift term. In the spirit of BP, we will ignore moduli dependence and treat these quantities as independent in an effort to understand the basic physics, but in the conclusions will discuss a number of ways to make further progress by relaxing the assumptions. Specifically, we will carry out an analysis of the vacuum energy spacing akin to the BP result, but necessitating a saddlepoint computation for the number of flux vacua rather than a volume approximation. The result is that this model can easily account for the observed small value of the cosmological constant --- in the Sch\"oller-Skarke ensemble~\cite{Scholler:2018apc} of $532,600,483$ sets of Hodge numbers for topologically rich Calabi-Yau fourfolds, a sufficiently small vacuum energy is realized an overwhelming fraction of the time, specifically $99.95\%$ for particular values of the parameters.  We then carry out an analysis of decay rates due to Brown-Teitelboim membrane nucleation in the thin-wall approximation, demonstrating that the bounce action is maximized along a particular hyperplane in flux space. Moving asymptotically away from that hyperplane yields the dominant decays, which we demonstrate arise from ``giant leaps'' in flux space~\cite{Brown:2010mg}.

This paper is organized as follows. In Sec.~\ref{sec:genBP} we review aspects of the BP model and propose a modification motivated by type IIB and F-theory considerations. In Sec.~\ref{VES} we carry out an analysis of the vacuum energy spacing, both in a volume approximation and in a saddlepoint approximation, and apply the latter to the Sch\"oller-Skarke ensemble of Calabi-Yau fourfolds. In Sec.~\ref{sec:cosmo} we turn to cosmological transitions on this flux lattice. Specifically, we study the Brown-Teitelboim membrane nucleation mechanism and decay rates in the model, and argue that the dominant decay channels generically describe giant leaps in flux space. We summarize our results and discuss some future directions in Sec.~\ref{sec:conclusions}. A number of technical appendices related to Dirichlet's approximation theorem and lattice point counting in high-dimensional spheres are essential for obtaining the results.

\section{A Modified Bousso-Polchinski Model}
\label{sec:genBP}

Our goal in this Section is to modify the BP model by including some details motivated by the GKP solution \cite{Giddings:2001yu} and type IIB/F-theory moduli stabilization scenarios, such as KKLT~\cite{Kachru:2003aw} and LVS~\cite{Balasubramanian:2005zx}.

Let us being by reviewing the BP model and some of its subtleties as a toy model for IIB moduli stabilization scenarios. It posits that the vacuum energy is 
\begin{equation}
  V_{\rm BP} = \Lambda_0 +  \frac{1}{2} \sum_{I,J=1}^{\cal J} g_{IJ}  N^I N^J\,,
\label{BP V}
\end{equation}
where $\Lambda_0 < 0$ is a bare negative cosmological constant, and $g_{IJ}$ is a metric on the vector of fluxes~$N^I$, which is of dimension~${\cal J}$. 
The original BP model~\cite{Bousso:2000xa} corresponds to~$g_{IJ} = q_I \delta_{IJ}$, with the charges~$q_I$ setting the lattice spacing,
such that
\be
V_{\rm BP} =  \Lambda_0 + \frac12 \sum_{I=1}^{\cal J} q_I^2 N_I^2 \,.
\label{BP V original}
\ee
Equation~\eqref{BP V} is a generalization proposed in~\cite{Denef:2008wq}.
Obtaining a small cosmological constant satisfying $0 < V < \epsilon$ requires finding fluxes within a shell of width~$\epsilon$ on the boundary of an~${\cal J}$-ball of radius~$|\Lambda_0|$ in the $g_{IJ}$ metric. This is depicted in the left panel of Fig.~\ref{fig:wafer}. 
The central idea of the model is that small cosmological constants exist in the shell, which actually make up the bulk of the volume of the~${\cal J}$-ball for large~${\cal J}$. Transitions between these fluxes are mediated by the Brown-Teitelboim membrane nucleation mechanism, generalizing the Brown-Teitelboim result to the case of many fluxes.
The model ignores both the dependence of~$g_{IJ}$ on moduli fields, and also consistency conditions such as Gauss's laws that depend on the context.

\begin{figure}[h]
\centering
\begin{tikzpicture}[scale=1.5]
    \draw[thick, darkgray, fill=blue!5, fill opacity=0.3] (0,0) circle (2cm);

    \begin{scope}
        \clip (0,0) circle (2cm);
        \foreach \x in {-2.5,-2.4,...,2.5} {
            \draw[gray!45, thin] (\x,-2.5) -- (\x,2.5);
        }
        \foreach \y in {-2.5,-2.4,...,2.5} {
            \draw[gray!45, thin] (-2.5,\y) -- (2.5,\y);
        }
    \end{scope}

    \fill[green!50!black, opacity=0.3, even odd rule] (0,0) circle (1.3cm) (0,0) circle (1.1cm);

    \fill[red, opacity=0.3] (0,0) circle (1.1cm);

    \draw[thick, darkgray] (0,0) circle (2cm);

    \draw[thick] (2.2,2.2) -- (2.52,2.2);
    \draw[thick] (2.2,2.2) -- (2.2,2.52);
    \node at (2.38,2.38) {$N$};

    \node[] at (0, 0) {AdS};
    \node[] at (0, 1.65) {dS};
\end{tikzpicture}
\hspace{1cm}
\begin{tikzpicture}[scale=1.5]
    \def\rotangle{45}

    \begin{scope}
        \clip (0,0) circle (2cm);

        \foreach \x in {-2.5,-2.4,...,2.5} {
            \draw[gray!45, thin] (\x,-2.5) -- (\x,2.5);
        }
        \foreach \y in {-2.5,-2.4,...,2.5} {
            \draw[gray!45, thin] (-2.5,\y) -- (2.5,\y);
        }

        \begin{scope}[rotate=\rotangle]
            \fill[red, opacity=0.3] (1.3,-2.5) rectangle (2.5,2.5);
            \fill[red, opacity=0.3] (-2.5,-2.5) rectangle (-1.3,2.5);
        \end{scope}

        \begin{scope}[rotate=\rotangle]
            \fill[green!50!black, opacity=0.3] (1.1,-2.5) rectangle (1.3,2.5);
        \end{scope}

        \begin{scope}[rotate=\rotangle]
            \fill[green!50!black, opacity=0.3] (-1.3,-2.5) rectangle (-1.1,2.5);
        \end{scope}
    \end{scope}

    \draw[thick, darkgray, fill=blue!5, fill opacity=0.3] (0,0) circle (2cm);

    \draw[thick] (2.2,2.2) -- (2.52,2.2);
    \draw[thick] (2.2,2.2) -- (2.2,2.52);
    \node at (2.38,2.38) {$N$};

    \begin{scope}[rotate=\rotangle]
        \node[] at (1.65, 0) {AdS};
        \node[] at (-1.65, 0) {AdS};
        \node[] at (0, 0) {dS};
    \end{scope}
\end{tikzpicture}
\caption{{\it Left:} The Bousso-Polchinski model in flux space. The interior red region contains AdS vacua, surrounded by a thin green shell of de Sitter vacua with cosmological constant~$0 \leq V \leq \Delta\Lambda$, and an outer dS region. {\it Right:} Our model in flux space. The small-$\Lambda$ dS regions are organized into wafer-shaped slices, with dS instead of AdS in the central region.}
\label{fig:wafer}
\end{figure}

We first revisit the model in light of the GKP solution and its utilization in type IIB moduli stabilization scenarios such as KKLT and LVS, especially since the BP model predates the others. This was done in detail in seminal works by Ashok, Denef, and Douglas \cite{Ashok:2003gk,Denef:2004cf,Denef:2004ze}, but here we will aim to retain a simple toy model in the spirit of the original BP proposal, with an eye toward subsequent tunneling and measure studies. 

In the F-theory description of GKP on an elliptic Calabi-Yau fourfold $X$, the vacuum energy arises as 
\begin{equation}
V = -\frac{\chi}{24} + N_{D3} + \frac12 \int_X G_4 \wedge *G_4\,,
\label{pot start}
\end{equation}
where~$\chi$ is the Euler characteristic of~$X$, and~$N_{D3}$ is the number of D3-branes. 
Expanding~$G_4$ in a basis of harmonic forms~$\Sigma_I \in H^4(X,\bZ)$, this realizes the BP model provided that 
\begin{equation}
\Lambda_0 = -\frac{\chi}{24} + N_{D3}\,; \qquad \qquad g_{IJ} = \int \Sigma_I \wedge * \Sigma_J\,.
\label{Lambda0 chi}
\end{equation}
The metric dependence of~$g_{IJ}$, neglected in the BP model, enters through the Hodge star~$*$. Crucially, the D3-brane tadpole cancellation condition,
\begin{equation}
-\frac{\chi}{24} + N_{D3} + \frac12 \int G_4 \wedge G_4 = 0\,,
\label{eqn:d3tadpole}
\end{equation}
was also neglected. This condition may be used to rewrite the potential~\eqref{pot start} as 
\begin{equation}
V = \frac{1}{4} \int \big(G_4 - * G_4\big) \wedge *\big(G_4 - * G_4\big)\,.
\end{equation}
At this intermediate stage, there are a number of potential causes for concern. One might worry that since the intersection pairing on~$H^4(X,\bZ)$ is not positive definite in general, there may be an infinite number of solutions to the D3-brane tadpole cancellation condition. However, for self-dual vacuum solutions,~$G_4 = * G_4$, we have~$\int G_4\wedge G_4 = \int G_4 \wedge *G_4$, which is positive definite. We also have a marked departure from the original BP model: the D3-brane tadpole cancellation condition \eqref{eqn:d3tadpole} renders~$\Lambda_0$ flux-dependent, unlike the BP model, and furthermore in such a way that changing from one self-dual flux to another does not change the potential! This is the standard SUGRA Minkowski solution after complex structure moduli stabilization with $W_0\neq 0$. 

As GKP demonstrated, this potential arises from the contribution to the~$\mathcal{N}=1$ F-term potential from the Gukov-Vafa-Witten superpotential
\begin{equation}
W = \int G_4 \wedge \Omega = N_I \Pi_I\,,
\end{equation}
where~$\Omega$ is the holomorphic 4-form, and
\begin{equation}
\Pi_I = \ell_s^{-3} \int_{\Sigma_I} \Omega
\end{equation}
are the periods in units of the string length~$\ell_s$. In a slight abuse of notation, we write~$\Sigma_I$ also as a four-cycle by Poincar\' e duality.
We follow the conventions of~\cite{demirtas2021small} of dimensionless superpotential for the flux superpotential, since the periods are in flat coordinates. 
We define 
\be
{\cal V} = \frac{\text{Vol}(X)}{\ell_s^{6}}  
\label{vol reln}
\ee
the volume of the Calabi-Yau in string units. 

Let us review how moduli stabilization scenarios such as KKLT alter the vacuum structure via the inclusion of additional effects beyond complex structure stabilization. First, K\" ahler moduli stablization puts the theory in a SUSY AdS minimum whose depth depends on~$W_0$, 
\begin{equation}
  V_{\rm AdS} = -3 {\rm e}^{K/M_{\rm Pl}} |W_0|^2M_{\rm Pl}^4\,.
\end{equation}
Setting~$M_{\rm Pl} = 1$, and using the tree-level K\"ahler potential~$K=-2\log\cV$ with string loop and $\alpha'$ corrections neglected, we have
\begin{equation}
V_{\rm AdS} = -\frac{3}{\cV^2} \,|N_I \Pi_I|^2.
\end{equation}
In a full treatment, the periods are evaluated at vacua in complex structure moduli space. If one wishes, this can be written in terms of a ``metric" as 
\begin{equation}
  V_{\rm AdS} = -\frac{3}{\cV^2} \, g_{IJ} N^I N^J\,; \qquad g_{IJ} = \Pi_I \overline{\Pi}_J\,.
\end{equation}Notably, increasing the length of the flux vector in this metric at fixed complex structure \emph{lowers} the vacuum energy (unless the moduli are tuned to allow for a non-trivial zero eigenvector of~$g_{IJ}$). 

The final part of moduli stabilization schemes is a mechanism for uplift to de Sitter space, {\it e.g.}, via anti-D3 branes \cite{Kachru:2003aw} or D-terms \cite{burgess2003sitter}. Uplifts can be difficult to control, but in a our toy model we simply recognize the phenomenological necessity of positive vacuum energy and accordingly add a \emph{positive} $\Lambda_0$. A key assumption of our toy model with respect to the uplift is in treating it as independent of complex structure moduli \cite{Kachru_2007,Balasubramanian:2005zx}. We keep in mind that $\Lambda_0$ should vanish in the decompactification limit, so one should consider 
\begin{equation}
    \Lambda_0 \mapsto \frac{\tilde \Lambda_0}{\mathcal{V}^{2-\alpha}} \,,
\end{equation}
where $\alpha=0$ for an unwarped anti-D3 branes or D-terms and $\alpha=2/3$ for warped anti-D3 branes \cite{Kachru:2003sx}. For simplicity we leave $\Lambda_0$ in the following, remembering the $\mathcal{V}$ dependence, and take $\tilde \Lambda_0$ to be a constant scale.

\bigskip
Summarizing our construction, considerations from moduli stabilization schemes in type IIB/F-theory compactifications suggest a modification of the BP model to 
\begin{equation}
\boxed{V = \Lambda_0 -  \frac{3}{\cV^2} \,|N_I \Pi_I|^2\,.}
\label{our V}
\end{equation}
Compared to~$V_{\rm BP}$ in Eq.~\eqref{BP V}, there are two key differences: i) there is now a {\it minus sign} on the flux contribution;~ii) we have a positive~$\Lambda_0$ instead of negative~$\Lambda_0$, to do the uplift. 

Thus, instead of increasing flux raising the cosmological constant from a negative value as in the original BP model, we are lowering it from a positive value. Again considering the desire to choose fluxes such that we are in a low-lying de Sitter vacuum,~$0 < V < \epsilon$, the picture is now as in the right panel of Fig.~\ref{fig:wafer}. The central aspects of the BP model---a discretuum that can give rise to small cosmological constants and transitions between them mediated by the Brown-Teitelboim mechanism---are preserved, but now increasing flux numbers means decreasing vacuum energy, rather than larger vacuum energy, which will affect bubble nucleation and cosmological measure considerations.

\section{Vacuum Energy Spacing}
\label{VES}
 
Following BP, the question we would like to address is whether the discretuum of flux vacua, with potential energy given by~\eqref{our V}, is dense enough to achieve
a vacuum energy with the requisite sensitivity. Specifically, we are interested in the existence of flux vacua with vacuum energy sufficiently close to a fixed value~$V_\star$. For concreteness, we choose the cosmologically relevant value of~$V_\star = 0$, such that
\begin{equation}
0 \leq V_N \leq  \Delta\Lambda\,,
\label{VN bound}
\end{equation} 
where~$V_N$ is a model for cosmological constants that depends on flux quanta~$N$, and~$\Delta\Lambda \sim 10^{-120}$  
roughly matches the observed value in Planck units. 

Let us briefly review how this is achieved in the BP mechanism, focusing for simplicity on the original potential energy~\eqref{BP V original}.
This equation, together with~\eqref{VN bound}, yields
\be
2|\Lambda_0| <   \sum_{I=1}^{\cal J} q_I^2 N_I^2 < 2 \Big(|\Lambda_0| + \Delta\Lambda\Big)\,.
\label{shell BP}
\ee
To linear order in~$\frac{\Delta \Lambda}{|\Lambda_0|} \ll 1$, the shell has radius~$R = \sqrt{2|\Lambda_0|}$ and width
\be
\Delta R = \frac{\Delta\Lambda}{\sqrt{2|\Lambda_0|}}\,.
\ee
Thus its volume is given by 
\be
{\cal V}_{\rm shell} = {\cal V}_{}\frac{2\pi^{{\cal J}/2}}{\Gamma({\cal J}/2)} \big(2|\Lambda_0|\big)^{\frac{{\cal J}}{2} - 1}\Delta\Lambda\,,
\ee 
where~$2\pi^{{\cal J}/2}/\Gamma({\cal J}/2)$ is the area of the~${\cal J}-1$ sphere. The number of lattice points within the shell is the product of this volume times the density of lattice points,~$1/\prod\limits_{I = 1}^{{\cal J}} q_I$. Therefore, the requirement that there exist at least one lattice point within the shell,~${\cal V}_{\rm shell}
> \prod\limits_{I = 1}^{{\cal J}}q_I$, translates to the lower bound:
\be
\frac{\Delta \Lambda}{|\Lambda_0|} > \frac{\Gamma({\cal J}/2)}{\pi^{{\cal J}/2}} \prod\limits_{I = 1}^{{\cal J}}  \frac{q_I}{\sqrt{2|\Lambda_0|}} \,.
\ee
Using Stirling's formula for~$\Gamma({\cal J}/2)$ in the large-${\cal J}$ limit, the inequality simplifies to
\be
\boxed{\frac{\Delta \Lambda}{|\Lambda_0|}\, \gtrsim\, 2{\rm e} \sqrt{\frac{\pi}{{\cal J}}} \prod\limits_{I = 1}^{{\cal J}} \sqrt{\frac{q_I^2 {\cal J}}{4\pi {\rm e} |\Lambda_0|}} \qquad \text{(BP spacing)}\,.}
\label{BP Del Lambda}
\ee
Assuming~$|\Lambda_0|\sim 1$ in Planck units, such that~$\frac{\Delta \Lambda}{2|\Lambda_0|} \sim 10^{-120}$, then the inequality
can be satisfied for~${\cal J} \sim {\cal O}(10^2)$ and~$q_I \lesssim {\cal J}^{-1/2} \sim 10^{-1}$. For instance, with~${\cal J} = 200$ and~$q = 0.1$, the right-hand side is~$\simeq 2\times 10^{-124}$. 

The result depends on the assumption that the number of fluxes associated to lattice points in a high-dimensional ball is well approximated by the volume of the ball.  This depends crucially on both the radius and the dimension, and can be violated in practice \cite{Denef:2008wq}, {\it e.g.}, in the F-theory geometry with the largest number of flux vacua \cite{Taylor:2015xtz}. The volume approximation and improvements upon it will play a crucial role in our results.

\bigskip
We want to see if a similar argument applies to our case. Given the potential energy~\eqref{our V}, the analog of~\eqref{shell BP} is
\be
\Lambda_0 - \Delta \Lambda < \frac{3}{{\cal V}^2}\left\vert \Pi_I N^I\right\vert^2 <  \Lambda_0\,,
\label{shell ours}
\ee
where we recall that~$\Lambda_0 > 0$ in our case. By D3 tadpole cancellation~\eqref{eqn:d3tadpole} and $N_\text{D3}\geq0$, the flux vector has a norm bounded by the Euler characteristic:
\be
\frac12 N_I N^I \leq \frac{\chi}{24}\,,
\label{N bound}
\ee
where we have used flux-space coordinates where~$g_{IJ}=\delta_{IJ}$ for self-dual fluxes. 
This choice introduces a Jacobian factor in the density of lattice points, see, {\it e.g.}, \cite{Denef:2008wq} for a treatment. For self-dual fluxes it arises from the positive-definite part of the intersection pairing on the middle homology of the four-fold, which we neglect since it is highly model-dependent.

To simplify the discussion, let us follow BP and ignore moduli dependence, in which case we take~$\Pi_I$ to be a constant vector independent of~$N_I$. Since an overall phase may be absorbed into the definition of~$\Omega$, we may also choose the overall phase of~$\Pi_I$ such that the real and imaginary components of~$\Pi_I$ are equal,
that is,
\be
{\rm Re}\, \Pi_I = {\rm Im}\, \Pi_I \equiv \frac{1}{\sqrt{2}} \hat{\Pi}_I\,.
\label{Re Im Pi}
\ee
By definition,~$\hat{\Pi}_I$ is a real-valued vector, and we shall denote its norm by~$\hat{\Pi} = \sqrt{\hat{\Pi}_I\hat{\Pi}^I}$. Thus~\eqref{shell ours} becomes
\be
\Lambda_0 - \Delta \Lambda < \frac{3}{{\cal V}^2} \big(\hat{\Pi} \cdot N\big)^2 < \Lambda_0\,.  
\label{slab}
\ee
At this step in the calculation, the BP model would identify equipotentials --- the geometry in flux space with fixed vacuum energy, which in their case is a~${\cal J}-1$ sphere. We do the same, and find a rather different result: equipotentials with vacuum energy~$V_\star$ are given by the hyperplanes 
\begin{equation}
\hat \Pi\cdot N = \pm \sqrt{\frac{{\cal V}^2 \big(\Lambda_0-V_\star\big)}{3}  } \,.
\label{hyperplane eqn}
\end{equation} 
Such hyperplanes have a unit normal~$n_I = \frac{\hat{\Pi}_I}{\hat{\Pi}}$,
and are translated from the origin along the normal by a distance~$\sqrt{\frac{{\cal V}^2 (\Lambda_0-V_\star)}{3\hat \Pi^2}}$ in either direction.
The equipotentials of interest for our calculation have~$V_\star = 0$. See Fig.~\ref{fig:wafer_geometry}.

\begin{figure}[h]
\centering
\begin{tikzpicture}[scale=1.5]
    \def\rotangle{45}

    \begin{scope}
        \clip (0,0) circle (2cm);

        \foreach \x in {-2.5,-2.4,...,2.5} {
            \draw[gray!45, thin] (\x,-2.5) -- (\x,2.5);
        }
        \foreach \y in {-2.5,-2.4,...,2.5} {
            \draw[gray!45, thin] (-2.5,\y) -- (2.5,\y);
        }

        \begin{scope}[rotate=\rotangle]
            \fill[red, opacity=0.3] (1.3,-2.5) rectangle (2.5,2.5);
            \fill[red, opacity=0.3] (-2.5,-2.5) rectangle (-1.3,2.5);
        \end{scope}

        \begin{scope}[rotate=\rotangle]
            \fill[green!50!black, opacity=0.3] (1.1,-2.5) rectangle (1.3,2.5);
        \end{scope}

        \begin{scope}[rotate=\rotangle]
            \fill[green!50!black, opacity=0.3] (-1.3,-2.5) rectangle (-1.1,2.5);
        \end{scope}
    \end{scope}

    \draw[thick, darkgray, fill=blue!5, fill opacity=0.3] (0,0) circle (2cm);

    \draw[thick] (2.2,2.2) -- (2.52,2.2);
    \draw[thick] (2.2,2.2) -- (2.2,2.52);
    \node at (2.38,2.38) {$N$};

    \begin{scope}[rotate=\rotangle]
        \draw[<->, thick] (1.11, -1.15) -- (1.29, -1.15);
        \node[font=\small] at (1.2, -1.35) {$w$};
    \end{scope}

    \begin{scope}[rotate=\rotangle]
        \draw[-, thick, gray] (0,0) -- (1.2, 1.6);
        \node[font=\small] at (0.3, 1.0) {$\sqrt{\frac{\chi}{12}}$};

        \draw[-, thick, gray] (0,0) -- (1.2, 0);
        \node[font=\small] at (0.6, -0.35) {$\sqrt{\frac{\cV^2 \Lambda_0}{3 \,\hat{\Pi}^2}}$};

        \draw[-, thick, gray] (1.2, 0) -- (1.2, 1.6);
        \node[font=\small] at (1.5, 0.6) {$R_\text{disk}$};
    \end{scope}
\end{tikzpicture}
\caption{Geometric quantities determining flux counts. The tadpole constraint sets the ball radius~$\sqrt{\chi/12}$, while the distance from the origin to the thickness-$w$ wafer is~$\sqrt{\cV^2 \Lambda_0 / (3\,\hat{\Pi}^2)}$. $R_\text{disk}$ controls the number of lattice points and flux vacua with small cosmological constant.}
\label{fig:wafer_geometry}
\end{figure}

Together, the hyperplane equation~\eqref{hyperplane eqn} and tadpole condition~\eqref{N bound}
restrict the allowed set of lattice points to an~$({\cal J}-1)$-dimensional disk, of radius 
\be
R_{\rm disk} = \sqrt{ \frac{\chi}{12} - \frac{{\cal V}^2 \Lambda_0}{3 \,\hat{\Pi}^2}}\,.
\ee 
More generally, Eq.~\eqref{slab}, together with~\eqref{N bound}, defines a thin wafer of width 
\be
w = \frac{\Delta \Lambda}{ 2\hat{\Pi} \sqrt{\Lambda_0} } \sqrt{\frac{{\cal V}^2}{3}}\,.
\ee
The volume of the wafer is given by the width times the area of the~${\cal J}-1$-dimensional disk:
\be
{\cal V}_{\rm wafer} = \frac{2\pi^{\frac{{\cal J}-1}{2}}}{({\cal J}-1)\Gamma\left(\frac{{\cal J}-1}{2}\right)} \left( \frac{\chi}{12} - \frac{{\cal V}^2 \Lambda_0}{3\, \hat{\Pi}^2} \right)^{\frac{{\cal J}-1}{2}}\frac{\Delta \Lambda}{ 2\hat{\Pi} \sqrt{\Lambda_0} } \sqrt{\frac{{\cal V}^2}{3}}\,.
\ee
Using Stirling's formula and the volume approximation (which will be relaxed later), the requirement that there exist at least one lattice point within the wafer,~${\cal V}_{\rm wafer} > 1$, translates to
\be
\frac{\Delta \Lambda}{\Lambda_0} \gtrsim \sqrt{2{\rm e}} \pi \left(\frac{6{\cal J}}{{\rm e}\pi \chi}\right)^{{\cal J}/2}  \frac{\sqrt{\frac{\hat{\Pi}^2 \chi}{{\cal V}^2 \Lambda_0}}}{\left(1 - \frac{4{\cal V}^2\Lambda_0}{\hat{\Pi}^2\chi}\right)^{\frac{{\cal J}-1}{2}}} > 2\sqrt{2{\rm e}} \pi \left(\frac{6{\cal J}}{{\rm e}\pi \chi}\right)^{{\cal J}/2} \sqrt{\frac{\hat{\Pi}^2 \chi}{4{\cal V}^2 \Lambda_0}}\,.
\label{our Del Lambda bound 1}
\ee
The last factor must satisfy~$\frac{\hat{\Pi}^2 \chi}{4{\cal V}^2 \Lambda_0} > 1$, in order for~$R_{\rm disk}$ to be real.
Thus our bound becomes 
\be
\boxed{\frac{\Delta \Lambda}{\Lambda_0} \gtrsim 2\sqrt{2{\rm e}} \pi\left(\frac{6{\cal J}}{{\rm e}\pi \chi}\right)^{{\cal J}/2} \,.}
\label{eqn:dellam_over_lam0_volume}
\ee
Interestingly, compared to the BP result~\eqref{BP Del Lambda}, we see that~${\cal J}/\chi$ effectively plays the role of~$q_I^2 {\cal J}$ in our case. A sufficiently small vacuum energy spacing,~$\Delta \Lambda\lesssim 10^{-120}$ requires
\be
 \Lambda_0  \left(\frac{6{\cal J}}{{\rm e}\pi \chi}\right)^{{\cal J}/2}\lesssim 6 \times 10^{-122} \,.
\label{eqn:dellam_over_lam0_volume final}
\ee
This result can be achieved readily, e.g. for~${\cal J} \sim 200$ with~${\cal J}/\chi \lesssim .1$ and $\Lambda_0 < 1$, but is conditional on the volume approximation.

The above derivation relied on the volume approximation (as in BP) to the number of lattice points (or vacua) within it. Its virtue is that it gives a simple analytic expression, which makes the~$\mathcal{J}$-dependence manifest. Below we will go beyond the volume approximation using a saddlepoint method. In Appendix~\ref{app:volume_approximation} we will see that the volume approximation is reasonable for~${\cal J}/\chi < 1/3$. Relative to Eq.~\eqref{eqn:dellam_over_lam0_volume final}, this implies that~$\Delta \Lambda / \Lambda_0$ is very small when the volume approximation is valid. 

\subsection{Improved Vacuum Counts with a Saddlepoint Approximation}

We wish to improve the analysis by generalizing it to the case that the volume approximation no longer holds. 
Specifically, whether the volume of a~$d$-ball of radius~$R$ is a good approximation to the number of lattice points within it depends on~$d$ and~$R$. 
 We will see examples in which it fails, and therefore we wish to consider more accurate methods~\cite{Denef:2008wq}. 

Let~$N(d,R)$ be the number of lattice points in a~$d$-ball~$B_{d,R}$ of radius~$R$. This is equal to \cite{Mazo1990}
\begin{equation}
N(d,R) = \frac{1}{2\pi {\rm i}} \int \frac{{\rm d}t}{t} {\rm e}^{-tR^2/2}\, \, \vartheta_3\big(0,{\rm e}^{t/2}\big)^d\,,
\end{equation}
where~$\vartheta_3$ is the Jacobi theta function. When~$d$ is large, the integral may be evaluated in the saddlepoint approximation,
\begin{equation}
N_{\rm saddle} (d,R) := N(d,R) \big\vert_\text{saddle} = {\rm e}^{S{(t_\star)}} \,,
\end{equation}
where~$t_\star$ is the saddle point of the integrand, given by the extrema of
\begin{equation}
S(t) = -\ln(-t) + d \,\ln \vartheta_3\big(0,{\rm e}^{t/2}\big) - \frac{tR^2}{2}\,.
\label{eqn:saddle_point_action}
\end{equation}
In the context of counting string vacua this is discussed at length in~\cite{Denef:2008wq}; see~\cite{Mazo1990} for original derivations in the math literature. 

Some well-known results in F-theory rely heavily on this method. Let the volume approximation be~$N_{\rm vol}(d,R) := \text{vol}(B_{d,R})$ for the sake of comparison. 
For concrete examples, we consider two notable F-theory geometries:
\begin{itemize}

\item The F-theory geometry~$\mathcal{M}_\text{max}$ with the most flux vacua has fourth Betti number and Euler characteristic given by
\begin{align}
\nonumber
(h^{1,1}, h^{2,1}, h^{3,1})^\text{max} &= (252,0,303148)\\ 
\nonumber
b_4^{\text{max}} &= 1819942\\
\chi^{\text{max}} &= 1820448 \,.
\end{align}

\item Its mirror, the F-theory geometry~$\mathcal{M}_{h^{1,1}\text{-max}}$ with maximal~$h^{1,1}$, has
\begin{align}
\nonumber
(h^{1,1}, h^{2,1}, h^{3,1})^{h^{1,1}\text{-max}} &= (303148,0,252)\\ 
\nonumber
b_4^{h^{1,1}\text{-max}} &= 1214150\\
\chi^{h^{1,1}\text{-max}} &= 1820448 \,.
\end{align}

\end{itemize}
The dimension of the associated ball is~$b_4$, and the radius for both is~$R = \sqrt{2Q} = \sqrt{2 \chi/24}  \approx 389.6$. This data yields estimates for the number of flux vacua in the two cases:
\begin{align}
&\mathcal{M}_\text{max}:   & N_{\rm saddle}^{\text{max}}\approx 10^{272000}\,;\qquad~~~~~   & N_{\rm vol}^{\text{max}}           \approx 10^{140000} \label{eqn:max_comparison}\\
& \mathcal{M}_{h^{1,1}\text{-max}}:   & N_{\rm saddle}^{h^{1,1}\text{-max}}\approx 10^{244000}\,; \qquad~~~~~  & N_{\rm vol}^{h^{1,1}\text{-max}}   \approx 10^{200000} \label{eqn:h11max_comparison}\,.
\end{align}
These saddle results are from~\cite{Taylor:2015xtz} and~\cite{Wang:2020gmi}, but this calculation demonstrates the importance of checking the validity of the volume approximation. The results are quite different, and in the volume counting one would conclude that the geometry with the maximal~$h^{1,1}$ is actually the F-theory geometry with the most flux vacua.

\medskip
With these caveats about the volume approximation in mind, we redo the calculation for the~$\Delta \Lambda/\Lambda_0$ bound using the saddlepoint approximation. The condition is that 
\begin{equation}
{\cal V}_{\rm wafer} =  N_{\rm saddle}(\mathcal{J}-1, R_\text{disk}) \, w> 1\,,
\end{equation} 
which yields
\begin{equation}
  \frac{\Delta \Lambda}{\Lambda_0} > \frac{2}{N_{\rm saddle} (\mathcal{J}-1, R_\text{disk})}\, \sqrt{\frac{3\hat \Pi^2}{\mathcal{V}^2 \Lambda_0}}=: \delta\,.
  \label{eqn:dellam_over_lam0_saddle}
\end{equation}
Compared to \eqref{eqn:dellam_over_lam0_volume} this equation is more accurate for large~$\mathcal{J}$, but the~$\mathcal{J}$-dependence is not as clear, due to the non-trivial dependence of the saddlepoint calculation on~$\mathcal{J}$. As we will see in examples,~$\log\delta \simeq -\log N_{\rm saddle}$ is very small.

\subsection{Vacuum Energy Spacing in the Sch\"oller-Skarke Database}

In~\cite{Scholler:2018apc}, Sch\"oller and Skarke computed all weight systems for Calabi-Yau fourfolds from reflexive polyhedra, giving rise to an enormous database of smooth Calabi-Yau fourfolds. Though significantly larger numbers of Calabi-Yau fourfolds are known to exist, {\it e.g.},~\cite{Halverson:2017ffz} and~\cite{Taylor:2017yqr}, a reflexive polytope description in~\cite{Scholler:2018apc} makes it straightforward to compute the Hodge numbers in terms of polytope data, which the authors did and have kindly shared with us. Their database realizes fourfolds with~$532,600,483$ distinct sets of Hodge numbers.

Such geometries enjoy simple relations amongst their Hodge and Betti numbers. Indeed, given~$(h^{1,1}, h^{2,1}, h^{3,1})$, the remaining Hodge numbers are given by
\begin{align}
\nonumber
h^{2,2} &= 4\big(h^{1,1} + h^{3,1}\big) + 44 - 2h^{2,1}\,;\\
\nonumber
b_4 &= 2 + 2h^{3,1} + h^{2,2} \,;\\ 
\chi &= 6\big(8 + h^{1,1} + h^{3,1} - h^{2,1}\big)\,.
\label{b4chi}
\end{align}
For the entire database we have computed the lower-bound on the vacuum energy spacing in \eqref{eqn:dellam_over_lam0_saddle}, assuming ${\cal V}^2\Lambda_0/\hat{\Pi}^2=1$ and noting that~$R = \sqrt{2\chi/24}$, 
~$\mathcal{J}=b_4$. The results are presented in Fig.~\ref{fig:scholler_skarke} and can give rise to small vacuum energy spacing.

Under Calabi-Yau fourfold mirror symmetry,~$h^{3,1}$ and~$h^{1,1}$ are interchanged while~$h^{2,1}$ is fixed. This leaves~$\chi$ and~$h^{2,2}$ invariant, but not~$b_4$. Since our vacuum energy spacing depends on~$b_4$ and~$\chi$, it is therefore not respected by mirror symmetry. This seems to contradict the RHS of Fig.~\ref{fig:scholler_skarke}. There, the classic mirror symmetry ``shield" is displayed. A careful inspection of the colors at far top and far right of the plot show points with slightly different shades of purple, corresponding to the vacuum energy spacing of~$\sim 10^{-244000}$ and~$10^{-272000}$ for the F-theory geometries with maximal~$h^{1,1}$ and most flux vacua, respectively. These results are effectively~$1/N_{\rm saddle}$ for the associated geometries, which might be understood in terms of the $\mathcal{N}=1$ breaking of mirror symmetry.

\begin{figure}
   \centering
   \includegraphics[width=1.0\textwidth]{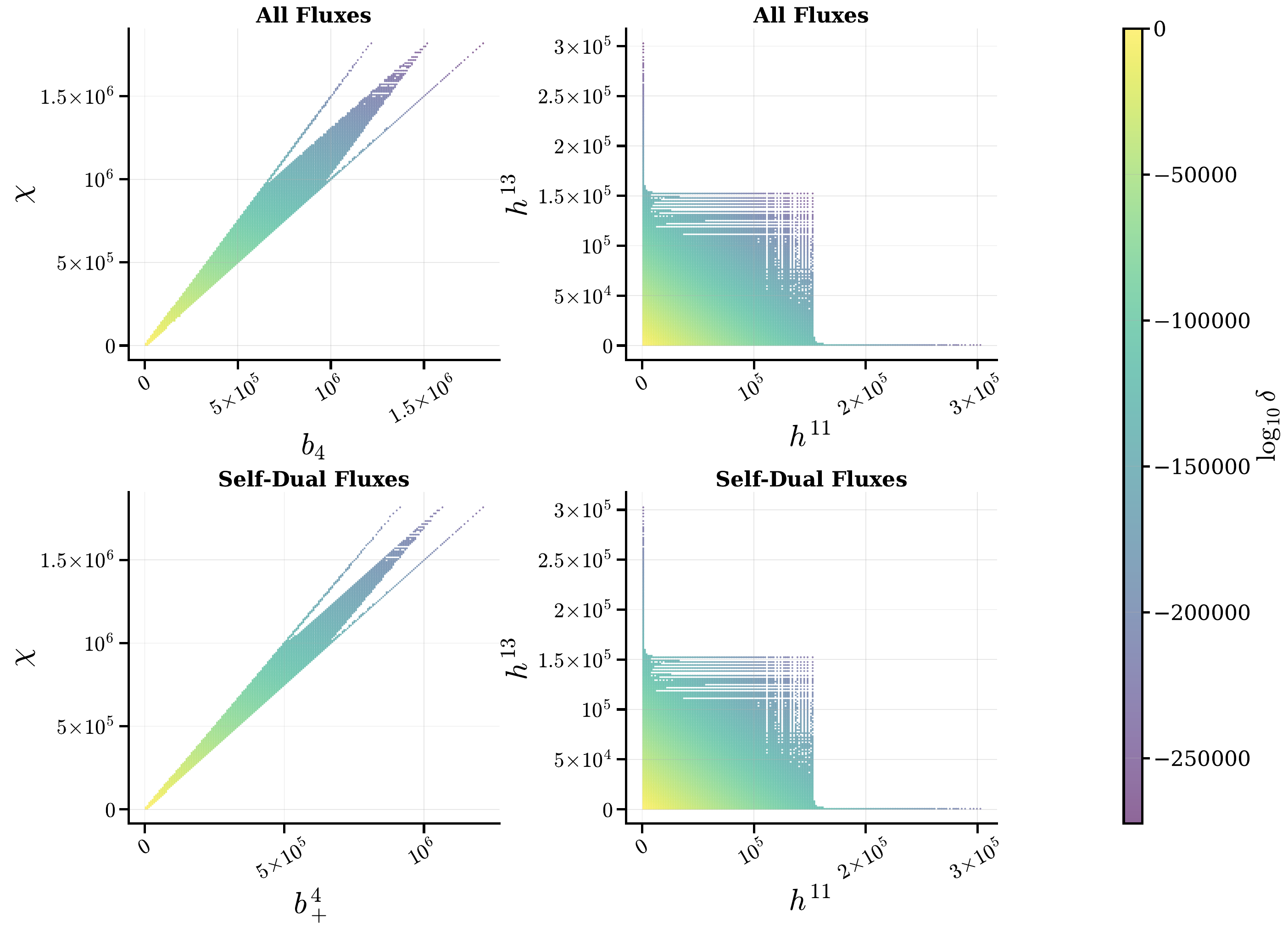}
   \caption{Vacuum energy spacing, defined as~$\frac{\Delta \Lambda}{\Lambda_0} > \delta$ in Eq.~\eqref{eqn:dellam_over_lam0_saddle}, applied to the Sch\"oller-Skarke database. {\it Top row:} For the most general four-form fluxes afforded by the theory, {\it i.e.}, with~$\mathcal{J} = b_4$, we show the spacing bound~$\delta$ as a function of~$\mathcal{J}$ and~$\chi$ ({\it Left}) and as a function of~$h^{1,1}$ and~$h^{3,1}$ ({\it Right}), which are exchanged under mirror symmetry. {\it Bottom row:} Same plots now restricted to self-dual fluxes, {\it i.e.}, with~$\mathcal{J} = b_4^+$. With all (self-dual) fluxes we find that~$99.96\%$ ($99.95\%$) of the Sch\"oller-Skarke database have~$\delta \lesssim 10^{-120}$.  } 
     \label{fig:scholler_skarke}
\end{figure}

\bigskip
\noindent \emph{Self-Dual Fluxes}

In the previous analysis of the Sch\"oller-Skarke database, we have taken~$\mathcal{J} = b_4$, {\it i.e.}, we consider the most general four-form fluxes afforded by the theory. On the other hand, fluxes used in common moduli stabilization scenarios are self-dual,~$G_4 = * G_4$, when formulated in F-theory language. We wish to perform an analysis of the Sch\" oller-Skarke database in this case as well.  

To do so, we compute the number of self-dual fluxes. The Lefschetz decomposition of~$H^4$ is given by (see, {\it e.g.}, \cite{Denef:2008wq})
\begin{align}
\nonumber
H^4_+ &= H^{0,4}_{\ell=0} \oplus H^{2,2}_{\ell=0} \oplus H^{2,2}_{\ell=2} \oplus H^{4,0}_{\ell=0}\,; \\ 
H^4_- &= H^{1,3}_{\ell=0} \oplus H^{2,2}_{\ell=1} \oplus H^{3,1}_{\ell=0}\,,
\end{align}
where~$\ell$ is the ``spin'' associated to the form, and the raising operator of the associated~$SU(2)$ algebra is given by wedging with the K\"ahler form. The self-dual fluxes are therefore given by~$H^4_+$. There is one~$\ell=2$~$(2,2)$-form, and the number of~$\ell=1$~$(2,2)$-forms is~$h^{1,1}-1$ since they are obtained by acting with the raising operator twice on any~$(1,1)$-form orthogonal to the K\" ahler form. The remaining~$(2,2)$-forms are~$\ell=0$, and there are~$h^{2,2} - h^{1,1}$ of them. The number of cohomology classes with self-dual fluxes is therefore
\begin{equation}
b_4^+ = 2 h^{4,0} + h^{2,2}_{\ell=0} + h^{2,2}_{\ell=2} = 3 + h^{2,2}-h^{1,1}\,,
\end{equation}
which can easily be determined from the Hodge numbers of the Sch\"oller-Skarke database.

We repeat the previous analysis, except in this case we take~$\mathcal{J} = b_4^+$.  The results are presented in the bottom row of Fig.~\ref{fig:scholler_skarke}. The vacuum energy spacing is larger because the dimension of the flux space is smaller. The examples~$\mathcal{M}_\text{max}$ and its mirror~$\mathcal{M}_{h^{1,1}\text{-max}}$ with the largest~$h^{1,1}$ have saddle-point estimates for the number of flux vacua given by
\begin{align}
  N_{\rm saddle}^{\text{max}}           \approx 10^{244000}\,;  \qquad \qquad
  N_{\rm saddle}^{h^{1,1}\text{-max}}   \approx 10^{224000}\,,   \label{eqn:selfdual_big_vac_counts}
\end{align}
which drives the small vacuum energy spacing. In the case where self-duality is not imposed, these should be compared respectively to~$N_{\rm saddle}^{\text{max}}\approx 10^{272000}$ and~$N_{\rm saddle}^{h^{1,1}\text{-max}}\approx 10^{244000}$ in Eqs.~\eqref{eqn:max_comparison} and~\eqref{eqn:h11max_comparison}. We have followed the tradition~\cite{Taylor:2015xtz} of keeping three significant figures in these large exponents, which has led to the coincidence that~$244,000$ arises in both~\eqref{eqn:max_comparison} and~\eqref{eqn:selfdual_big_vac_counts}, despite different geometries and flux assumptions.

\bigskip 

\noindent \emph{Summary.}
Whether we consider only self-dual fluxes or all fluxes, the topology associated to~$(\mathcal{J},\chi)$ in examples from the Sch\"oller-Skarke database is rich enough to support the vacuum energy spacing~$< 10^{-120}$. Specifically, with all (self-dual) fluxes we find that that overwhelming majority of the Sch\"oller-Skarke database exhibits such small spacing ($99.96\%$ ($99.95\%$) when ${\cal V}^2\Lambda_0/\hat{\Pi}^2=1$), as measured by~$\delta$.

\section{Membrane Nucleation and Decay Rates}
\label{sec:cosmo}
Equipped with a simplified model for the cosmological constant, we may consider dynamical transitions between vacua via tunneling effects in cosmology.
Consider a flux changing transition via the Brown-Teitelboim mechanism, wherein 
\begin{equation}
N \mapsto N' = N + \Delta N\,.
\end{equation}
For simplicity, we assume the transition is mediated by a BPS or anti-BPS brane wrapped on a cycle determined by~$\Delta N$. Such membrane has a tension 
\begin{equation}
T = 4\pi \sqrt{\frac{2 g_s}{\cV}}\, \big\vert\hat \Pi\cdot \Delta N \big\vert\,,
\label{Tdef} 
\end{equation}
where our normalization convention for the holomorphic three-form~$\Omega$ is~$||\Omega|| = 1$. See Appendix~\ref{app:holomorphic_form} for a discussion of this point, and~\cite{deAlwis:2013gka} for a derivation using a different normalization. The associated change in vacuum energy is given by 
\begin{equation}
\Delta V := V_N - V_{N'} =  \frac{3}{\cV^2}  \big(\hat \Pi\cdot \Delta N\big) \, \big(2 \hat \Pi \cdot N  + \hat \Pi\cdot \Delta N \big)\,.
\label{Del V def}
\end{equation}
We focus on ``down-tunneling'' transitions, which lower the vacuum energy and therefore have~$\Delta V > 0$.\footnote{Up-tunneling are comparatively suppressed by an exponential of the difference in de Sitter entropies.} Hence both parenthetical factors must have the same sign. 

The tunneling rate is~$\Gamma \sim {\rm e}^{-B}$, where the bounce action for a Brown-Teitelboim transition in the thin-wall, probe-brane approximation, ignoring gravitational corrections, is given by
\begin{equation}
\label{eqn:BT_bounce}
B = \frac{27\pi^2}{2} \frac{T^4}{(\Delta V)^3} 
= 512 \pi^6  g_s^2 \,\mathcal{V}^4 \,\,\frac{\hat\Pi \cdot  \Delta N}{\big(2 \hat \Pi \cdot N + \hat \Pi\cdot \Delta N \big)^3} \,.
\end{equation}
We require~$B \gg 1$ to ensure self-consistency of the semi-classical approximation and validity of the dilute instanton gas approximation. 
Notice, however, that if~$\hat{\Pi}$ is commensurate, {\it i.e.},
if there exists~$\Delta N \in \bZ^J$ such that~$\hat \Pi  \cdot \Delta N = 0$, then~$B=0$ in this case,
violating our assumptions. 
Fortunately, commensurate vectors in~$\mathbb{R}^{\cal J}$ are a set of measure zero. Conversely, for~$\hat \Pi$ incommensurate,~$\hat \Pi \cdot \Delta N \neq 0$,  we have~$B > 0$
and the dilute instanton gas can always be made valid by an appropriate scaling of the volume~$\cV$. We henceforth restrict to this case in order to analyze bounce actions.

\medskip
Bubble nucleation via the decay of the false vacuum is dominated by the largest decay rates, corresponding to the smallest bounces. A key question is whether the dominant decay channel corresponds to a ``small step'' ({\it i.e.}, with~$\big\vert\Delta N\big\vert \sim 1$) or a ``giant leap'' ({\it i.e.},~$\big\vert \Delta N \big\vert\gg 1$) in flux space. In BP, the answer depends on the flux configuration of the initial state~\cite{Brown:2010mg}. In our case, {\it we will see that the bubble transitions are always dominated by giant leaps.} To see this, it is helpful to analyze the bounce action in two limiting regimes, according to which term in the denominator of Eq.~\eqref{eqn:BT_bounce} dominates.

\begin{itemize}

\item \textbf{Case One.} The first regime is~$\big\vert\hat \Pi \cdot N\big\vert \ll \big\vert\hat \Pi\cdot \Delta N \big\vert$, in which case
\begin{equation}
B \simeq 512 \pi^6  g_s^2 \,\mathcal{V}^4\frac{1}{\big(\hat \Pi\cdot \Delta N \big)^2} \,;\qquad \big\vert\hat \Pi \cdot N\big\vert \ll \big\vert\hat \Pi\cdot \Delta N \big\vert \,.
\label{B first regime} 
\end{equation}
Clearly the bounce action is minimized by maximizing~$\big\vert \hat \Pi\cdot \Delta N\big\vert$.
In turn, since~$\hat \Pi$ is positive,~$\big\vert \hat \Pi\cdot \Delta N\big\vert$ can be maximized by increasing~$| \Delta N|$, up to the D3 tadpole constraint that eventually cuts off the ability to increase it. To be precise, since~$\Delta N = N' - N$, then a fixed initial state~$N$ we see that~$|\Delta N|$ is maximized if~$N'$ is antipodal to~$N$, and saturates the inequality~\eqref{N bound}, {\it i.e.},~$|N'| = \sqrt{\chi/12}$. In other words,
\begin{equation}
\big\vert \Delta N\big\vert \leq R_{\Delta N} \equiv \sqrt{\frac{\chi}{12}} + |N|  \,.
\label{R Delta N}
\end{equation}
The bounce action~\eqref{B first regime} in this first regime thus satisfies
\begin{equation}
B \gtrsim 512 \pi^6  g_s^2 \,\mathcal{V}^4 \frac{1}{\hat{\Pi}^2 \left(\sqrt{\frac{\chi}{12}} + |N|\right)^2} \,. 
\end{equation}
Importantly, the bounce is minimized generically for large flux changes.

\item \textbf{Case Two.} The more subtle case is that of~$\big\vert\hat \Pi \cdot N\big\vert \gg \big\vert\hat \Pi\cdot \Delta N \big\vert$. In this regime we have 
\begin{equation}
B \simeq 64 \pi^6  g_s^2 \,\mathcal{V}^4 \frac{\hat \Pi \cdot \Delta N }{\big(\hat \Pi \cdot N\big)^3} \,;\qquad \big\vert\hat \Pi \cdot N\big\vert \gg \big\vert\hat \Pi\cdot \Delta N \big\vert\,.
\label{B second regime}
\end{equation}
Evidently, the bounce action is minimized by minimizing~$ \hat \Pi\cdot \Delta N$ in this case. For what type of flux changes does this occur? 
For a ``small step'' in flux space, defined to be~$\Delta N_{I,\pm} := \pm e_I$, where~$e_I$ is a unit vector in the~$I$-th direction, we have
\begin{equation}
\hat \Pi \cdot \Delta N_{I,\pm} = \pm \hat \Pi_I\,.
\end{equation}
For such small jumps the dependence of the bounce on~$\hat \Pi \cdot \Delta N$ is only through~$\hat \Pi_I$:
\begin{equation}
B_\text{small step} \simeq 64 \pi^6  g_s^2 \,\mathcal{V}^4  \frac{\hat \Pi}{\big(\hat \Pi \cdot N\big)^3} \frac{\hat{\Pi}_I}{\hat{\Pi}} \leq 64 \pi^6  g_s^2 \,\mathcal{V}^4  \frac{\hat \Pi}{\big(\hat \Pi \cdot N\big)^3}\,.
\label{B small step}
\end{equation}
 
On the other hand, by Dirichlet's approximation theorem (see Appendix~\ref{sec:Dirichlet}), we know that there exists a~$\Delta N_I$ such that 
\begin{equation}
\big\vert \hat \Pi\cdot \Delta N\big\vert \leq  \hat{\Pi} \frac{2^{{\cal J}-1} \Gamma(\frac{{\cal J}+1}{2})}{\pi^{\frac{{\cal J}-1}{2}}R_{\Delta N}^{{\cal J}-1}}\,,
\label{dirichlet result}
\end{equation}
where~$R_{\Delta N}$ is defined by~\eqref{R Delta N}. Such~$\Delta N$ generically describes a giant leap in flux space. Using Stirling's formula, the bounce action in this regime therefore satisfies
\begin{eqnarray}
B_\text{giant leap} &\leq&  64 \pi^7  g_s^2 \,\mathcal{V}^4  \frac{\hat \Pi}{\big(\hat \Pi \cdot N\big)^3} \sqrt{\frac{\chi}{24{\rm e}}}
\left(\frac{24{\cal J}}{{\rm e}\pi \chi}\right)^{{\cal J}/2} \left(\frac{1}{1 + \frac{|N|}{\sqrt{\chi/12}}}\right)^{{\cal J}-1} \,.
\label{B giant leap 2nd regime}
\end{eqnarray}
Equation~\eqref{B giant leap 2nd regime} differs from the inequality~\eqref{B small step} for small steps mainly by the last two factors.
The last factor is clearly less than unity, but more important is the factor of~$\left(\frac{24{\cal J}}{{\rm e}\pi \chi}\right)^{{\cal J}/2}$, which is~$\ll 1$ for~${\cal J} \gg 1$ and~${\cal J}/\chi \lesssim 1$. Since these conditions are necessary to obtain a small vacuum energy spacing, we conclude that giant leaps dominate vacuum transitions in this regime.

\end{itemize}

What remains to be shown is that these regimes can both be relevant. Taking~$\Delta N_I$ to be continuous and taking a first derivative, we have
\begin{equation}
\frac{\partial B}{\partial \Delta N_I} \propto \hat\Pi_I \,\, \frac{2\big(\hat\Pi\cdot N - \hat\Pi\cdot \Delta N\big)}{\big(2\hat\Pi \cdot N + \hat\Pi\cdot \Delta N \big)^4}\,,
\end{equation}
establishing that there is a critical point when~$\hat\Pi \cdot N =\hat\Pi\cdot \Delta N$. An analysis of the second derivative demonstrates that this critical point is a maximum. We therefore minimize~$B$ by moving away from~$\Pi \cdot N =\hat\Pi\cdot \Delta N$, yielding the two limiting regimes we have analyzed.\footnote{In moving away from~$\hat\Pi \cdot N =  \hat\Pi\cdot \Delta N$, one might violate the dilute instanton gas approximation before developing a large hierarchy between the quantities. This can be remedied by taking~$\mathcal{V}$ sufficiently large.}

We have argued that there are two limiting regimes relevant for making the bounce small, and that in both of the regimes the flux changes that minimize the bounce (and thereby maximize the decay rate) are large. Long jumps dominate tunneling in this model.

\subsection{Lifetime Considerations}

For a geometry to be phenomenologically viable, it should not only harbor vacua with sufficiently small vacuum energy, but these should also be sufficiently long lived.
Focusing on a single, dominant decay channel for simplicity, the tunneling rate per unit volume in the semi-classical approximation is
\be
\Gamma = M^4 {\rm e}^{-B}\,,
\ee
where~$B$ is the bounce action, and~$M^4$ is the fluctuation determinant. To be most conservative, we will take the latter to be set by the Planck scale,~$M \sim M_{\rm Pl}$.
Specifically, we are interested in estimating the lifetime of a low-lying dS vacuum, with~$V_\star \ll \Lambda_0$. Such a vacuum is metastable if the probability to nucleate a bubble within a Hubble time~$\sim H_\star^{-1}$ and over a Hubble volume~$\sim H_\star^{-3}$ is less than unity, where~$H_\star = \sqrt{\frac{V_\star}{3M_{\rm Pl}^2}}$. That is,
\be
\frac{\Gamma}{H_\star^4}\lesssim 1  \qquad \Longrightarrow \qquad \left(\frac{3M_{\rm Pl}^4}{V_\star}\right)^2 {\rm e}^{-B} \lesssim 1\,.
\label{lifetime bound 1}
\ee

From~\eqref{hyperplane eqn}, low-lying dS vacua have
\be
\big\vert \hat{\Pi} \cdot N\big\vert \simeq \sqrt{\frac{{\cal V}^2\Lambda_0}{3}}\,.
\label{Pi dot N low lying}
\ee
Since~$\big\vert \hat{\Pi} \cdot N\big\vert$ is large, the dominant transitions out of such vacua are in all likelihood in the second regime considered above, namely $\big\vert\hat \Pi \cdot N\big\vert \gg \big\vert\hat \Pi\cdot \Delta N \big\vert$. The relevant bounce action inequality is given by Eq.~\eqref{B giant leap 2nd regime}. By Cauchy-Schwarz inequality,~\eqref{Pi dot N low lying} implies~$\hat{\Pi} |N| \geq  \sqrt{\frac{{\cal V}^2\Lambda_0}{3}}$, and therefore
\be
 \frac{|N|}{\sqrt{\chi/12}} \geq  \sqrt{\frac{4{\cal V}^2\Lambda_0}{\hat{\Pi}^2 \chi}} \,.
\label{N over chi bound}
 \ee
The inequality ~\eqref{B giant leap 2nd regime} is about the existence of a decay channel with bounce bounded above. Plugging
~\eqref{Pi dot N low lying} and~\eqref{N over chi bound} into it, we obtain
\be
B\lesssim \frac{192 \pi^7}{\sqrt{2{\rm e}}}  \frac{g_s^2 {\cal V}^2}{\Lambda_0} \left(1 + \sqrt{\frac{\hat{\Pi}^2\chi}{4{\cal V}^2\Lambda_0}} \right) \left(\frac{\sqrt{\frac{24{\cal J}}{{\rm e}\pi \chi}}}{1 + \sqrt{\frac{4{\cal V}^2\Lambda_0}{\hat{\Pi}^2\chi}}} \right)^{\cal J}\,,
\label{B bound interm}
\ee
This bound applies to low-lying dS vacua.
Combining this result with the lifetime bound~\eqref{lifetime bound 1}, which bounds the bounce from below, we obtain
\be
\boxed{\frac{g_s^2 {\cal V}^2}{\Lambda_0} \left(1 + \sqrt{\frac{\hat{\Pi}^2\chi}{4{\cal V}^2\Lambda_0}} \right) \left(\frac{\sqrt{\frac{24{\cal J}}{{\rm e}\pi \chi}}}{1 + \sqrt{\frac{4{\cal V}^2\Lambda_0}{\hat{\Pi}^2\chi}}} \right)^{\cal J} \;\gtrsim\; \frac{\sqrt{2{\rm e}}}{96\pi^7} \ln \frac{3M_{\rm Pl}^4}{V_\star} \; \simeq \;  10^{-3} \,,}
\label{lifetime bound 2}
\ee
where in the last step we have substituted~$V_\star \simeq 10^{-120} M_{\rm Pl}^4$ for our vacuum. 

The primary danger of not having a long enough lifetime arises from the term with the $\mathcal{J}$ exponent, leading to
\begin{equation}
    \frac{24\mathcal{J}}{e\pi \chi} > 1
    \label{eqn:approx lifetime}
\end{equation}
as an effective necessary condition in the case that $\mathcal{J}>1$ and the $\sqrt{4\mathcal{V}^2\Lambda_0/(\hat\Pi^2 \chi)}= \sqrt{4\tilde \Lambda_0/(\hat\Pi^2 \chi \mathcal{V}^\alpha)}$ in the denominator is negligible assuming large enough $\chi$ and $\mathcal{V}$. For instance, if we assume~$\frac{{\cal V}^2\Lambda_0}{\hat{\Pi}^2} = 1$, as before, then~$\frac{{\cal V}^2\Lambda_0}{\hat{\Pi}^2\chi} \ll 1$ for~$\chi \gg 1$. Further assuming $g_s=1$ and applying this to the Sch\"oller-Skarke database, we find that the entire set of $532,600,401$ fourfolds with $\chi>0$ (all are $\geq 0$, only $82$ have $\chi=0$)  satisfy both the effective condition \eqref{eqn:approx lifetime} and  \eqref{lifetime bound 2}, justifying the approximate condition. Furthermore, the $\chi=0$ cases are pathological, since it change the nature of the D3 tadpole condition.

\section{Conclusions}
\label{sec:conclusions}

In this paper we have studied a modified BP model motivated by Type IIB and F-theory flux compactifications, with vacuum energy given by 
\begin{equation}
V = \Lambda_0 - \frac{3}{\mathcal{V}^2}\, |N_I \Pi_I|^2\,,
\end{equation}
where we allow $\Lambda_0$ to have volume dependence as $\Lambda_0 = \tilde \Lambda_0 / \mathcal{V}^\alpha$.
The key structural difference relative to the original BP model is the sign of the flux contribution: rather than raising the cosmological constant from a negative bare value, fluxes lower it from a positive uplift term~$\Lambda_0$. This simple sign change has significant consequences for both the geometry of the discretuum and the dynamics of vacuum transitions.
Our model is related to work by Denef and Douglas~\cite{Denef:2004ze,Denef:2004cf}, but its simplicity allows us to carry out analyses of vacuum energy spacing and bubble nucleation in the spirit of BP.

We have shown that the locus of small dS cosmological constants in flux space takes the form of thin wafers rather than thin shells. The wafer geometry arises because equipotentials are hyperplanes orthogonal to the period vector~$\hat{\Pi}$, intersected with the D3-brane tadpole ball. This geometric picture enabled an analysis of vacuum energy spacing using a saddlepoint approximation for counting lattice points, which we demonstrated is essential when the flux dimension~$\mathcal{J}$ is large relative to the tadpole charge~$\chi$. Applying this framework to the Sch\"oller-Skarke database of~$532,600,483$ distinct sets of Hodge numbers for Calabi-Yau fourfolds, we find that the overwhelming majority of cases exhibit vacuum energy spacings of~$10^{-120}$ or smaller (e.g., $99.95\%$ for specific parameter values), demonstrating that the topological richness of these compactifications is more than sufficient to accommodate the observed cosmological constant. 

We further analyzed Brown-Teitelboim membrane nucleation in this model, finding that the bounce action admits a maximum along the hyperplane~$\hat{\Pi} \cdot N = \hat{\Pi} \cdot \Delta N$. Moving asymptotically away from this hyperplane in either direction minimizes the bounce and maximizes the decay rate. In both asymptotic regimes, the dominant vacuum decays correspond to large jumps in flux space, a conclusion supported by Dirichlet's approximation theorem applied to the lattice of flux quanta. This preference for large flux changes has potential implications for the cosmological measure problem, as it suggests that the dynamics of the landscape may favor transitions that traverse significant distances in the discretuum. 
We obtain a bound 
\begin{equation}
    \frac{24 \mathcal{J}}{e\pi\chi} > 1
\end{equation}
on the topology of the Calabi-Yau necessary for a low-lying dS vacuum to have a lifetime consistent with the age of the universe. It is satisfied for the Sch\"oller-Skarke database.

\medskip
Several avenues for future work present themselves. First, the moduli dependence we have suppressed deserves systematic treatment. In a complete analysis, the periods~$\Pi_I$ are evaluated at critical points in complex structure moduli space, and the wafer geometry would be replaced by a more intricate structure reflecting the interplay between flux choice and moduli stabilization. Second, our analysis of membrane nucleation utilized the thin-wall, probe-brane, no-gravity approximation; incorporating gravitational corrections and thick-wall effects would sharpen predictions for decay rates and vacuum lifetimes. Additionally, one must consider the dependence of the bounce action on the moduli space metric, potential, and vacuum locations. Third, the preference for large flux jumps we have identified should be incorporated into cosmological measure proposals. It would be interesting to understand whether this dynamical feature, combined with anthropic selection effects, leads to predictions for the distribution of observed cosmological constants. Fourth, the role of vertical fluxes merits investigation, particularly in connection with gauge symmetry breaking and the Standard Model. Finally, it would be interesting to study the computational complexity of our model in light of proposed complexity measures \cite{Carifio:2017nyb,Denef:2017cxt,Khoury:2019yoo,Khoury:2019ajl,Kartvelishvili:2020thd,Khoury:2022ish,Khoury:2023ktz,Hassfeld:2022vzk,Hassfeld:2024ake} in cosmology.

We hope that this simplified model, in the spirit of the original BP proposal, will serve as a useful conceptual tool for exploring the interplay between the string landscape, cosmological dynamics, and the cosmological constant problem.

\bigskip
\noindent\textbf{Acknowledgements.} We thank Mehmet Demirtas, Jakob Moritz, Liam McAllister, Wati Taylor, and Yi-Nan Wang for discussions. We are grateful to Friedrich Sch\"oller for providing Hodge data from the Sch\"oller-Skarke database. J.H. is supported by NSF CAREER grant PHY-1848089 and PHY-2209903. The work of J.K. is supported in part by the DOE (HEP) Award DE-SC0013528.  The work of C.L. supported in part by the Alfred P. Sloan Foundation Grant No. G-2019-12504 and by DOE Grant DE-SC0013607.

\let\cleardoublepage\clearpage
\appendix

\section{Dirichlet's Approximation Theorem}
\label{sec:Dirichlet}

Dirichlet's theorem is a fundamental result in Diophantine approximation. The original theorem is for a single variable, but we wish to use the multi-dimensional generalization.

The multi-dimensional version can  can be obtained as a corollary of Minkowski's theorem, which states that every convex set in~$\bR^n$ that is symmetric about the origin and has  volume greater than~$2^n$ contains a non-zero integer point~$N \in \mathbb{Z}^n$. Let us use this result to prove Dirichlet's approximation theorem:
\begin{theorem}[Dirichlet's Approximation Theorem]
For any~$\Pi\in\mathbb{R}^n$, there exists a constant~$C$ such that for any~$Q>0$, there exists a nonzero~$ N\in\mathbb{Z}^n$ with~$\| N\|\le Q$ and
\[
  |\Pi\cdot N| \;\leq\; \frac{C}{Q^{\,n-1}}.
\]
\end{theorem}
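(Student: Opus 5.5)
Apply Minkowski's theorem, as stated just above, to a symmetric convex ``slab-cylinder'' in $\bR^n$. Assume $\Pi\neq 0$; if $\Pi=0$ the statement is trivial, since any unit lattice vector works for $Q\ge 1$. For $Q<1$ there is no nonzero integer point of norm at most $Q$, so the statement should be read for $Q\ge 1$ (or $C$ chosen so the bound is vacuous there). For given $Q$ and $\epsilon>0$, consider
\[
K_{Q,\epsilon}=\bigl\{x\in\bR^n:\ \|x\|\le Q,\ |\Pi\cdot x|\le \epsilon\bigr\}.
\]
This set is the intersection of a ball with a slab, both convex and symmetric about the origin. It is therefore convex, symmetric and compact.

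The key step is to estimate its volume from below. Write $\hat n=\Pi/\|\Pi\|$ and decompose $x=s\hat n+y$ with $y\perp\hat n$. The slab condition is $|s|\le \epsilon/\|\Pi\|$. Assume $\epsilon/\|\Pi\|\le Q/\sqrt2$. Then $K_{Q,\epsilon}$ contains the cylinder $\{|s|\le \epsilon/\|\Pi\|,\ \|y\|\le Q/\sqrt2\}$. Its volume is
\[
\frac{2\epsilon}{\|\Pi\|}\,\omega_{n-1}\Bigl(\frac{Q}{\sqrt2}\Bigr)^{n-1},
\]
where $\omega_{n-1}=\pi^{(n-1)/2}/\Gamma(\frac{n+1}{2})$ is the volume of the unit $(n-1)$-ball. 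One could instead use the exact cross-section $\sqrt{Q^2-s^2}$ to get sharper constants, for instance the one appearing in Eq.~\eqref{dirichlet result}, but that is not needed for existence of some $C$.

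Minkowski requires volume strictly greater than $2^n$. Choose
\[
\epsilon=\epsilon(Q)=\frac{2^{n-1}\|\Pi\|\,2^{(n-1)/2}}{\omega_{n-1}}\cdot\frac{1+\eta}{Q^{n-1}}
\]
for some small fixed $\eta>0$. The volume then exceeds $2^n$. Since $K_{Q,\epsilon}$ is compact, one can even take $\eta=0$ with the closed-set version of Minkowski. Minkowski's theorem then gives a nonzero $N\in\bZ^n$ in $K_{Q,\epsilon}$, that is, $\|N\|\le Q$ and $|\Pi\cdot N|\le C/Q^{n-1}$ with
\[
C=2^{n-1}\,2^{(n-1)/2}(1+\eta)\,\|\Pi\|/\omega_{n-1}.
\]
This $C$ depends only on $n$ and $\|\Pi\|$, not on $Q$.

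The main obstacle is the side condition $\epsilon/\|\Pi\|\le Q/\sqrt2$, which makes the cylinder fit inside the ball. It can fail only when $Q^n\lesssim C/\|\Pi\|$, that is, for $Q$ below a fixed threshold $Q_0$. For $1\le Q\le Q_0$ I would handle the statement directly. Take $N=e_1$, or any fixed nonzero lattice vector of norm $1$, so that $|\Pi\cdot N|\le\|\Pi\|\le \|\Pi\|Q_0^{n-1}/Q^{n-1}$. Enlarging $C$ to $\max(C,\|\Pi\|Q_0^{n-1})$ then covers all $Q\ge1$ uniformly and completes the argument.
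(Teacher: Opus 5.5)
Your proof is correct and follows the same route as the paper: Minkowski's theorem applied to the ball--slab intersection $K=\{\|x\|\le Q,\ |\Pi\cdot x|\le\epsilon\}$, with $\epsilon\propto Q^{-(n-1)}$ chosen so that the volume exceeds $2^n$. Your version is more careful than the paper's, which has three loose points. First, it equates $\mathrm{vol}(K)$ with the full cylinder volume $\frac{2\epsilon}{\|\Pi\|}\mathrm{vol}(B_{n-1}(Q))$; this is an overestimate, accurate only to leading order when $\epsilon/\|\Pi\|\ll Q$. Second, it passes to the equality case through an informal $\delta\to0$ limit. Third, it ignores $Q<1$, where the statement is false as written. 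Your inscribed cylinder of radius $Q/\sqrt2$, your separate treatment of $1\le Q\le Q_0$, and your use of the compact form of Minkowski address all three, at the cost of an extra factor $2^{(n-1)/2}$ in $C$.
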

\begin{proof}
Consider the convex set~$K=\{x\in \bR^n \, : \, ||x||\leq Q,\, \,\, |\Pi\cdot x | \leq \epsilon\}$ where~$\epsilon > 0$. The volume of this set is 
\begin{equation}
  \text{vol}(K) = \frac{2\epsilon}{||\Pi||}\,\,  \text{vol}(B_{n-1}(Q)) = \frac{2\epsilon}{||\Pi||}\, \frac{\pi^{\frac{n-1}{2}}}{\Gamma(\frac{n+1}{2})} \, Q^{n-1},
\end{equation}
and therefore~$K$ satisfies the assumptions of Minkowski's theorem if
\begin{equation}
\epsilon > ||\Pi|| \frac{2^n \Gamma(\frac{n+1}{2})}{2 \pi^{\frac{n-1}{2}}} \frac{1}{Q^{n-1}}=: \frac{C}{Q^{n-1}}.
\label{epsilon C}
\end{equation}
This establishes the existence of a non-zero~$N\in \bZ^n$ with~$|\Pi \cdot N| \leq \frac{C}{Q^{n-1}} + \delta =: \epsilon_\delta$ and~$\|N\|\leq Q$, for all~$\delta > 0$. Taking the limit~$\delta \to 0$ gives the desired result, technically requiring an infinitesimal~$\delta$ on the RHS. 
\end{proof}

\section{Lattice Points in High Dimensional Spheres}
\label{app:volume_approximation}

\begin{figure}[thbp]
   \begin{subfigure}{0.45\textwidth}
       \centering
       \includegraphics[width=\linewidth]{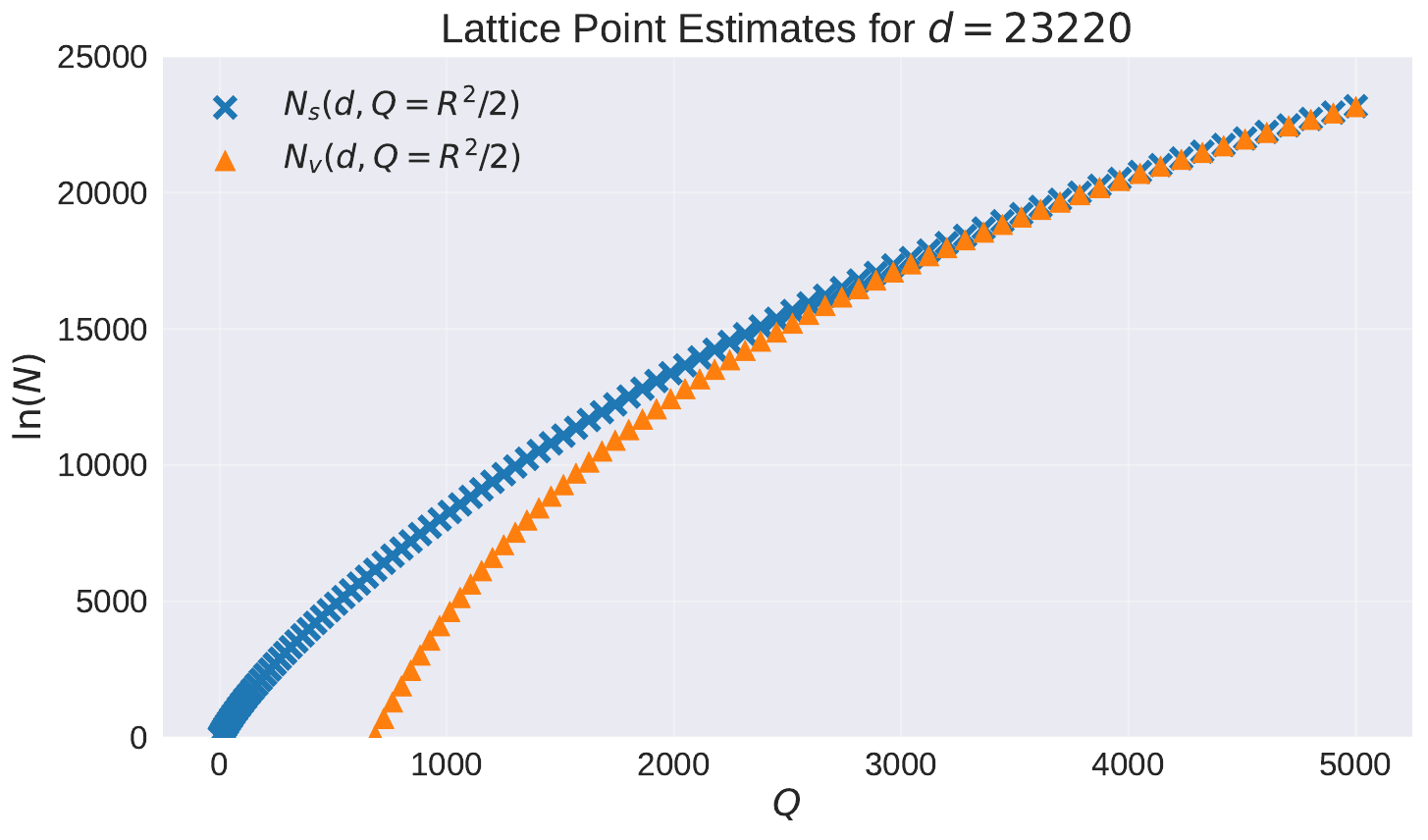}
   \end{subfigure}
   \hfill
   \begin{subfigure}{0.45\textwidth}
       \centering
       \includegraphics[width=\linewidth]{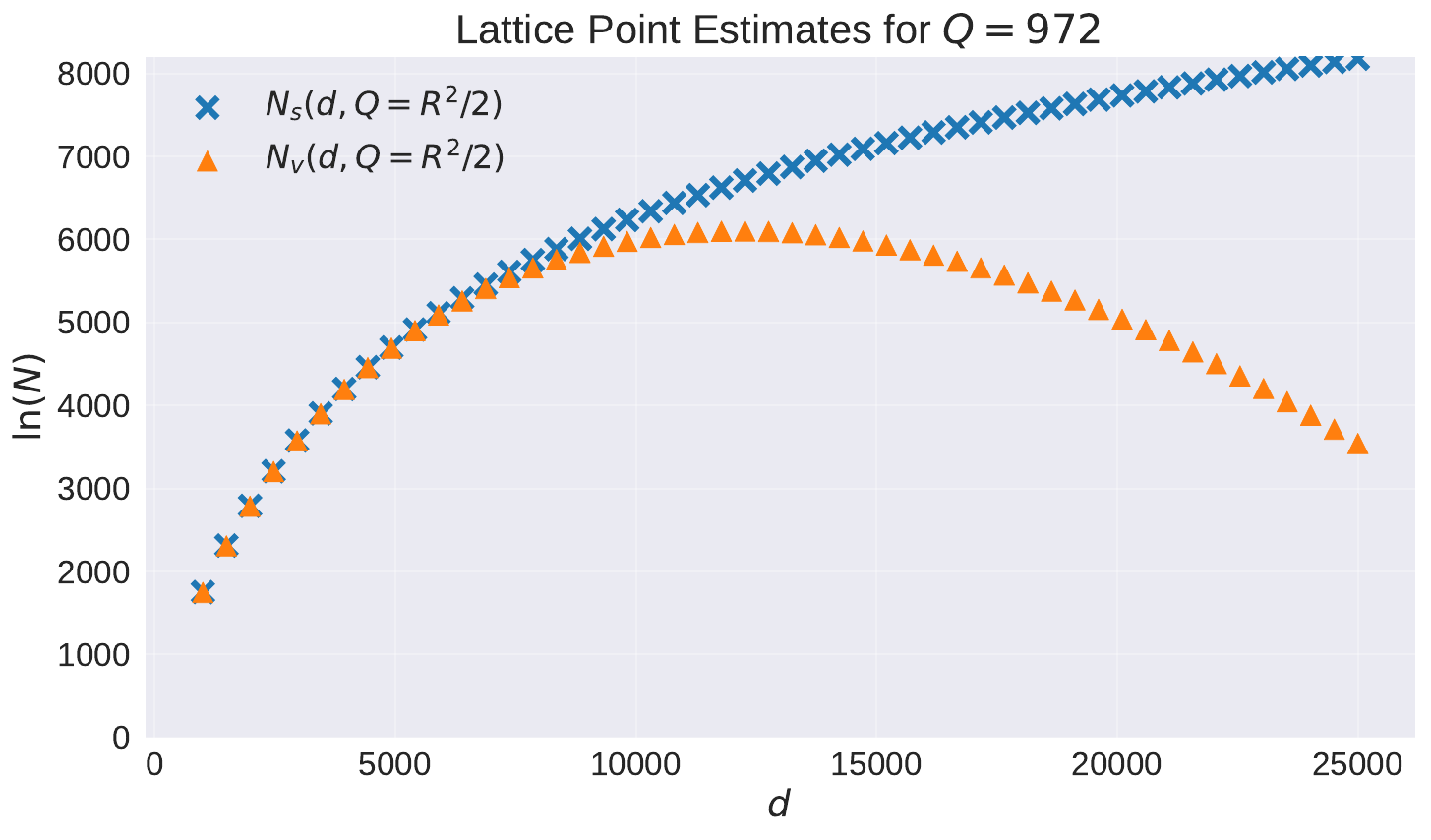}
   \end{subfigure}
   \caption{Lattice point estimates comparing saddlepoint and volume methods. Left: Fixed dimension~$d = 23220$ with varying~$Q$. Right: Fixed~$Q = 972$ with varying dimension~$d$.}
   \label{fig:denef_comparison}
\end{figure}

In this Section we collect facts and perform some elementary computations about the number of lattiec points in~$B_{d,R}$, the~$d$-dimensional ball of radius~$R$. 

Let~$N(d,R)$ be the number of lattice points in~$B_{d,R}$, and let~$N_{\rm saddle}(d,R)$ and~$N_{\rm vol}(d,R)$ denote respectively the saddlepoint and volume approximations to~$N(d,R)$ introduced in Sec.~\ref{sec:genBP}. To motivate the importance of these analyses we would like to reproduce the results of Fig.~15 in \cite{Denef:2008wq}. Taking~$d=23,220$ and a variety radii~$R$ and associated~$Q=R^2/2$, we find the results in Fig.~\ref{fig:denef_comparison}. We see that for fixed~$d$ and sufficiently large~$Q$, the volume approximation agrees with the saddlepoint approximation. On the other hand, for fixed~$Q$ and varying~$d$, we see that the volume approximation eventually \emph{decreases} with increasing~$d$, in clear contradiction to the expected growth of the number of lattice points. For~$d$ that is large relative to~$R$, the volume approximation breaks down, as highlighted in Sec.~\ref{sec:genBP} in different estimates for the F-theory geometries with the most flux vacua and the largest~$h^{1,1}$. From the LHS we see that the volume approximation is good provided that~$Q\gtrsim d/8$, as noted in \cite{Denef:2008wq}. One may verify this estimate also for~$d=b_4^{\text{max}} = 1819942$ and~$b_4^{h^{1,1}\text{-max}} = 1214150$, but unfortunately both of these examples have~$Q \ll d/8$, and therefore the volume approximation is not valid. One must use the saddlepoint method for these values of~$d, R$.

The saddlepoint method for counting lattice points in higher dimension spheres is introduced in~\cite{Mazo1990}, where the use of the saddlepoint method appears near the end, and error estimates for this approximation are not given. From the saddle point action~\eqref{eqn:saddle_point_action} we see that the saddlepoint approximation is valid provided that~$d$ is large enough. However, we wish to get some idea of how large~$d$ must be. To do so, one may obtain exact counts~$N(d,R)$ for~$d\leq 10$ and~$R\leq 18$ from the Online Encyclopedia of Integer Sequences (OEIS)~\cite{OEIS}. We can then compare the saddlepoint approximation, the volume approximation, and the exact counts for small~$d$ and~$R$. In doing so, we see that the absolute percent error in the base~$10$ exponent relative to the brute force result approaches about~$15\%$ and~$2\%$ for the saddlepoint and volume approximations, respectively, as~$d$ increases to~$10$. This should be contrasted with, {\it e.g.}, the~$\sim 200\%$ discrepancy in the exponent between the volume and saddle approximations for the F-theory geometry with the most flux vacua, \eqref{eqn:max_comparison}. For these reasons, and since we will generally be interested in large~$d$ for the sake of robust flux landscapes, we utilize the saddlepoint approximation.

\bibliographystyle{chetref}
\section{Normalization of Holomorphic Top-form}
\label{app:holomorphic_form}

Let $\Omega$ be the holomorphic three-form on a Calabi-Yau three-fold $X$, $[\Sigma]$ a class in the middle homology of $X$, and $\Sigma$ a calibrated submanifold (a special Lagrangian). $\Omega$ is unique up to scale, but the physics in this paper requires being precise about the normalization. Specifically, the volume of a general submanifold $\Sigma'$ is given by 
\begin{equation}
\text{Vol}(\Sigma') = \int_{\Sigma'} \sqrt{g} \, {\rm d}^n x\,.
\end{equation}
On the other hand, if $\Sigma$ is calibrated by $\Omega$, then
\begin{equation}
\text{Vol}(\Sigma) \propto \text{Re} \int_\Sigma \Omega \,.
\end{equation}
The normalization of $\Omega$ is crucial to determine the proportionality constant between the volume and the period $\int \Omega$, which in this paper appear in the physics of the Brown-Teitelboim membrane tension and the flux superpotential, respectively. Two common choices for normalization are 
\begin{enumerate}
\item $\int_X \Omega \wedge \overline{\Omega} = 1$, common in the mathematics literature. This normalization is less convenient for physics, since it does not have a natural scaling with the volume of $X$.
\item $\int_X \Omega \wedge \overline{\Omega} = -8{\rm i}\, \text{Vol}(X)$, in which case $\text{Vol}(\Sigma) = \text{Re} \int_\Sigma \Omega$; see \cite{Koerber:2010bx}.
\end{enumerate}
Putting the pieces together, we have that 
\begin{equation}
\text{Vol}(\Sigma) = \frac{\sqrt{8 \mathcal{V}}}{||\Omega||} \, \text{Re} \int_\Sigma \hat \Omega
\end{equation}
in a general normalization, where $\mathcal{V}$ is the volume of $X$.

However, since the K\"ahler potential for the complex structure sector is $-\log \left({\rm i} \int_X \Omega \wedge \overline{\Omega}\right)$, which would yield a $1/||\Omega||^2$ dependence in the potential, we see that our simplified model for the vacuum energy \eqref{our V} takes $||\Omega|| = 1$, and therefore $\text{Vol}(\Sigma) = \sqrt{8 \mathcal{V}} \, \text{Re} \int_\Sigma \hat \Omega$. This normalization choice is a gauge choice and does not affect the physics of the model.

\begingroup
\begin{center}
\textbf{References.}
\end{center}
\renewcommand{\section}[2]{}%
\bibliography{refs}
\endgroup

\end{document}